\newtheoremstyle{mystyle}%
  {\topsep}%
  {\topsep}%
  {\normalfont}%
  {}%
  {\bfseries}%
  {}%
  {.5em}%
  {}%
\theoremstyle{plain}
\newtheorem{definition}{Definition}
\newtheorem{assumption}{Assumption}
\newtheorem{lemma}{Lemma}
\newtheorem{theorem}{Theorem}
\pgfplotsset{compat=newest}
\title{\LARGE \bf
A Hierarchical Surrogate Model for Efficient Multi-Task\\Parameter Learning in Closed-Loop Control}
\author{Sebastian Hirt$^{1}$, Lukas Theiner$^{1}$, Maik Pfefferkorn$^{1}$, and Rolf Findeisen$^{1}$%
\thanks{$^{1}$Control and Cyber-Physical Systems Laboratory,
        Technical University of Darmstadt, Germany
        {\tt\small \{sebastian.hirt, rolf.findeisen\}@iat.tu-darmstadt.de}}%
}
\newcommand\copyrighttext{%
  \footnotesize \textcopyright \the\year{} IEEE. Personal use of this material is permitted. Permission from IEEE must be obtained for all other uses, including reprinting/republishing this material for advertising or promotional purposes, collecting new collected works for resale or redistribution to servers or lists, or reuse of any copyrighted component of this work in other works.}
\newcommand\copyrightnotice{%
\begin{tikzpicture}[remember picture,overlay]
\node[anchor=south,yshift=5pt] at (current page.south) {\fbox{\parbox{\dimexpr0.8\textwidth-\fboxsep-\fboxrule\relax}{\copyrighttext}}};
\end{tikzpicture}%
}
\begin{document}
\maketitle
\thispagestyle{empty}
\pagestyle{empty}

\copyrightnotice

\begin{abstract}
Many control problems require repeated tuning and adaptation of controllers across distinct closed-loop tasks, where data efficiency and adaptability are critical. We propose a hierarchical Bayesian optimization (BO) framework that is tailored to efficient controller parameter learning in sequential decision-making and control scenarios for distinct tasks. Instead of treating the closed-loop cost as a black-box, our method exploits structural knowledge of the underlying problem, consisting of a dynamical system, a control law, and an associated closed-loop cost function. We construct a hierarchical surrogate model using Gaussian processes that capture the closed-loop state evolution under different parameterizations, while the task-specific weighting and accumulation into the closed-loop cost are computed exactly via known closed-form expressions. This allows knowledge transfer and enhanced data efficiency between different closed-loop tasks. The proposed framework retains sublinear regret guarantees on par with standard black-box BO, while enabling multi-task or transfer learning. Simulation experiments with model predictive control demonstrate substantial benefits in both sample efficiency and adaptability when compared to purely black-box BO approaches.
\end{abstract}

\vspace{-1mm}\section{Introduction}\vspace{-1mm}\vspace{-1mm}\noindent
The work in this paper is motivated by the success of model predictive control (MPC), which is a powerful framework for controlling nonlinear systems subject to constraints \cite{rawlings2017model}. However, its practical success hinges, as in many control approaches, on well-tuned parameters, such as parameters of the MPC cost function. This presents challenges in many scenarios, where task-specific tuning to different closed-loop objectives hinders optimal long-term performance. Recent closed-loop learning approaches address these issues by combining parameterized MPC with high-level learning algorithms, such as Bayesian optimization (BO) or reinforcement learning, to improve long-term performance \cite{paulson2023tutorial, kordabad2023reinforcement, hirt2024safe, hirt2023stability}. 

We build on this idea and propose a sample-efficient, hierarchical Bayesian optimization framework for multi-task learning of controller parameters. While most existing BO-based closed-loop learning methods rely on surrogate models that learn the direct mapping from controller parameters to the closed-loop cost in a black-box fashion, they often neglect structural information about the underlying control loop. This can lead to inefficient learning, requiring many learning episodes and, consequently, many costly closed-loop experiments. In contrast, we exploit the hierarchical structure of the closed-loop learning problem, where a dynamical system evolves under a parameterized controller, and the resulting state and input trajectories are evaluated through a closed-loop cost function that aggregates stage costs over finite-length episodes. By explicitly taking this structure into account, we propose a hierarchical surrogate model that learns the parameter-dependent closed-loop behavior, while computing the total closed-loop cost per episode via known aggregation schemes. This decomposition is illustrated in Figure~\ref{fig:hierarchical_surrogate}, contrasting the proposed approach with standard black-box methods.

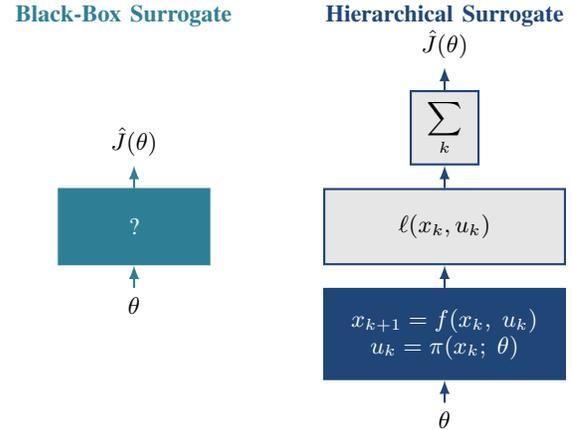
\begin{figure}[t]
    \centering
    \definecolor{turq}{HTML}{2E7F96}
\definecolor{myblue}{HTML}{1F4676}

\usetikzlibrary{arrows.meta,positioning}

\begin{tikzpicture}[%
    >=latex,
    node distance=1.4cm,
    font=\small
]

\node[draw=myblue, thick, fill=gray!20,
      minimum size=0.9cm,
      align=center] (sum) {$\sum_k$};
\node[above=0.3cm of sum] (yRight) {$\hat J(\theta)$};
\node[draw=myblue, thick, fill=gray!20,
      rectangle, minimum width=3.2cm, minimum height=1cm,
      below=0.3cm of sum, align=center] (cost)
      {$\ell(x_k, u_k)$};
\node[draw=myblue, thick, fill=myblue, text=white,
      rectangle, minimum width=3.2cm, minimum height=1.2cm,
      below=0.3cm of cost, align=center] (dyn)
      {$x_{k+1} = f(x_k,\;u_k)$\\
       $u_k = \pi(x_k;\;\theta)$};
\node[below=0.3cm of dyn] (thet) {$\theta$};
\draw[->, myblue, thick] (sum) -- (yRight);
\draw[->, myblue, thick] (cost) -- (sum);
\draw[->, myblue, thick] (thet) -- (dyn);
\draw[->, myblue, thick] (dyn) -- (cost);
\node[above=0.0cm of yRight, text=myblue, yshift=-0.2cm] (hbo) {\textbf{Hierarchical Surrogate}};

\node[draw=turq, thick, fill=turq, text=white,
      rectangle, minimum width=2.0cm, minimum height=1.0cm,
      left=1.5cm of cost] (bb) {?};
\node[below=0.3cm of bb] (G) {$\theta$};
\node[above=0.3cm of bb] (yLeft) {$\hat J(\theta)$};

\draw[->, turq, thick] (G) -- (bb);
\draw[->, turq, thick] (bb) -- (yLeft);
\node[left=1cm of hbo, text=turq] {\textbf{Black-Box Surrogate}};

\end{tikzpicture}
    \vspace{-2mm}
    \caption{Standard black-box approach (left) for controller tuning and the proposed hierarchical approach for a surrogate (right) that is tailored to parameter learning in sequential decision-making scenarios. Instead of learning the full black-box mapping between parameters and the closed-loop cost, we propose to learn the parameter-dependent closed-loop dynamics, allowing for transfer learning between different closed-loop tasks.}
    \label{fig:hierarchical_surrogate}
    \vspace{-7mm}
\end{figure}

By incorporating structure into a hierarchical Bayesian optimization procedure, we extend our previous work on time-series informed closed-loop learning \cite{hirt2024timeseriesinformed}. While our previous work exploited the time-series structure of the underlying problem via multi-fidelity modeling, it did not explicitly incorporate knowledge of the dynamical system, the parameterized controller, or the closed-loop cost function.
Consequently, we extend \cite{hirt2024timeseriesinformed} towards that in the present work. Additionally, we not only retain sublinear regret guarantees for the proposed approach, which are comparable to classical black-box Bayesian optimization, but also enable transfer learning in multi-task scenarios, where different closed-loop tasks share the same dynamical plant but vary in cost weighting.
While transfer learning and multi-task learning have been treated in the BO literature \cite{dai2020multitask, swersky2013multitask, bai2023transfer}, to the best of our knowledge, there are no methods available that tailor these approaches to closed-loop learning in sequential decision-making scenarios, exploiting this specific problem structure.
While there exist approaches that optimize control inputs directly via Bayesian methods \cite{baumgartner2025bayesian} for systems that are linear in the parameters, we focus on learning the controller parameters instead, and, additionally, allow for parameterizations that are nonlinear in the parameters. Furthermore, we exploit Gaussian process regression for learning (parts of) the BO surrogate, in contrast to \cite{baumgartner2025bayesian}. Moreover, while gray-box modeling has shown promise in Bayesian optimization \cite{paulson2022cobalt, astudillo2021thinking}, it has not been tailored to sequential decision-making with multi-task capabilities to the best of our knowledge.
The theoretical part of our work builds on literature that provides regret bounds for (multi-task) BO, e.g., \cite{srinivas2012informationtheoretic, chowdhury2021noregret, guan2024improved, dai2020multitask}, and extends it to the proposed hierarchical surrogate.

The main contributions of this work are: 1) a novel hierarchical surrogate modeling strategy that directly exploits knowledge about the hierarchical structure of the closed-loop cost function, 2) theoretical results showing sublinear cumulative regret, comparable to standard Bayesian optimization, and 3) demonstration of multi-task transfer learning that accelerates parameter learning across varied tasks. 4) We illustrate our approach with an MPC simulation, showing that significant gains in sample efficiency and adaptability to new tasks are achieved, compared to standard black-box methods from literature.

The remainder of this paper is structured as follows. In Section~\ref{sec:fundamentals}, we introduce the problem setting, including the control task and the basics of Gaussian process regression and Bayesian optimization. Section~\ref{sec:hbo} presents our hierarchical surrogate model and resulting Bayesian optimization algorithm, including the associated theoretical guarantees on sublinear regret. In Section~\ref{sec:simulation}, we validate the proposed approach through simulation experiments.

\vspace{-1mm}\section{Fundamentals}\label{sec:fundamentals}\vspace{-1mm}\noindent
In this section, we provide an overview of the control task and lay out the theoretical foundations relevant to our approach. We begin by defining the control objective and introducing parameterized model predictive control. In the following, we present Gaussian process regression as a surrogate modeling technique, followed by a description of a standard Bayesian optimization algorithm.

\vspace{-1mm}\subsection{Problem Formulation}\vspace{-1mm}\noindent
We consider nonlinear, discrete-time dynamical systems of the form 
\begin{equation}  
\label{eqn:discrete_system_general}  
    x_{k+1} = f(x_k, u_k) + w_k, 
\end{equation}  
where $ x_k \in \mathbb{R}^{n_\text{x}} $ are the system states, $ u_k \in \mathbb{R}^{n_\text{u}} $ are the control inputs, $ f: \mathbb{R}^{n_\text{x}} \times \mathbb{R}^{n_\text{u}} \to \mathbb{R}^{n_\text{x}} $ represents the (nonlinear) system dynamics, $w_k \in \mathbb{R}^{n_\text{x}}$ is Gaussian noise $w_k \sim \mathcal{N}(0, \Sigma_w)$, and $ k \in \mathbb{N}_0 $ is the discrete time index.  

Our objective is to control the system \eqref{eqn:discrete_system_general}, steering it from an initial state $ x_0 $ to a desired set-point $ (x_\mathrm{d}, u_\mathrm{d}) $ while satisfying state and input constraints. Model predictive control (MPC) provides a possible, structured framework for achieving this by solving an optimization problem at each time step based on a predefined cost function. However, the performance of MPC is highly dependent on the choice of cost function parameters, which must be tuned to achieve desirable closed-loop behavior\footnote{While we motivate our approach using MPC, the proposed framework is not limited to predictive control. It applies broadly to any control scheme with tunable parameters.}.

To address this challenge, we employ Bayesian optimization to systematically explore and refine the cost function parameters using closed-loop data. Bayesian optimization enables structured sample selection, allowing us to improve closed-loop performance in a sample-efficient manner.
Instead of treating the closed-loop performance as a black-box function, we propose a surrogate model that exploits structural knowledge about the underlying optimization problem to enhance sample efficiency and improve performance beyond standard black-box Bayesian optimization methods.

\vspace{-1mm}\subsection{Parameterized Model Predictive Control}\vspace{-1mm}\noindent
We consider a model predictive control (MPC) formulation that depends on a set of parameters $ \theta \in \Theta \subset \mathbb{R}^{n_{\text{p}}} $, where $ n_{\text{p}} \in \mathbb{N} $ denotes the number of tunable parameters. At each discrete time step $ k $, given a specific parameter choice $ \theta $, the MPC solves the parameterized optimal control problem
\begin{mini!}
    {\mathbf{\hat{u}}_k}{\left\{ \sum_{i=0}^{N-1} l_\theta({\hat x}_{i \mid k}, {\hat u}_{i \mid k}) + E({\hat x}_{N \mid k}) \! \right\}\label{eqn:mpc_ocp_cost}}{\label{eqn:mpc_ocp}}{}
    \addConstraint{\forall i}{\in \{0, 1, \dots, N-1\}: \notag}{}
    \addConstraint{}{\hat x_{i+1\mid k} = \hat f(\hat x_{i\mid k}, \hat u_{i\mid k}), \ \hat x_{0 \mid k} = x_k,}{\label{eqn:mpc_ocp_model}}
    \addConstraint{}{\hat x_{i \mid k} \in \mathcal{X}, \ {\hat u}_{i \mid k} \in \mathcal{U}, \ \hat x_{N \mid k} \in \mathcal{E}.}{\label{eqn:mpc_ocp_constraints}}
\end{mini!}
Here, $ \hat{\cdot}_{i\mid k} $ denotes the model-based prediction at time index $ k $, forecasting $ i $ steps ahead in time. The function $ \hat f: \mathbb{R}^{n_\text{x}} \times \mathbb{R}^{n_\text{u}} \to \mathbb{R}^{n_\text{x}} $ represents the prediction model, while $ x_k $ is the measured system state at time step $ k $.  
The prediction horizon has a finite length of $ N \in \mathbb{N}$. The stage and terminal cost functions are denoted by $l_\theta: \mathbb{R}^{n_\text{x}} \times \mathbb{R}^{n_\text{u}} \to [0, \infty)$ and $E: \mathbb{R}^{n_\text{x}} \to [0, \infty)$. The constraints in \eqref{eqn:mpc_ocp_constraints} include the state, input, and suitably designed terminal constraint sets \cite{rawlings2017model}, represented as $ \mathcal{X} \subset \mathbb{R}^{n_\text{x}} $, $ \mathcal{U} \subset \mathbb{R}^{n_\text{u}} $, and $ \mathcal{E} \subset \mathcal{X} $, respectively. 

Solving \eqref{eqn:mpc_ocp} yields the optimal input sequence $ \mathbf{\hat{u}}_k^*(x_k; \theta)=[\hat u_{0 \mid k}^*(x_k; \theta),\dots,\hat u_{N{-}1 \mid k}^*(x_k; \theta)] $. The first control input from this sequence is then applied to the system \eqref{eqn:discrete_system_general}. Consequently, solving \eqref{eqn:mpc_ocp} at each time step $ k $ defines the parameterized control policy $ \pi(x_k;\theta) = \hat u_{0 \mid k}^*(x_k; \theta) $.

\vspace{-1mm}\subsection{Gaussian Process Regression}\vspace{-1mm}\noindent
We explore Gaussian process (GP) regression to construct a surrogate model of the closed-loop performance as a function of the controller parameters $\theta$. Specifically, GP regression enables probabilistic modeling of an unknown function $\varphi: \mathbb{R}^{n_\xi} \to \mathbb{R}$ from observations.

Loosely speaking, a GP $g(\xi) \sim \mathcal{GP}(\mu(\xi), k(\xi, \xi'))$ is a distribution over functions and is fully specified by a prior mean function $\mu: \mathbb{R}^{n_\xi} \to \mathbb{R}, \xi \mapsto \mathrm{E}[g(\xi)]$ and a covariance function $k: \mathbb{R}^{n_\xi} \times \mathbb{R}^{n_\xi} \to \mathbb{R}, (\xi, \xi') \mapsto \mathrm{Cov}[g(\xi), g(\xi')]$, which are used to encode structural prior knowledge or beliefs, such as smoothness properties, about the unknown function.

To model the unknown function, we rely on a data set $\mathcal{D} = \{ (\xi_i, y_i) \}_{i=1}^{n_\text{d}}$ of noisy observations $y_i = \varphi(\xi_i) + \varepsilon_i$, with $\varepsilon_i \sim \mathcal{N}(0, \sigma^2)$. GP regression yields a posterior distribution over function values at test inputs $\xi_*$. 

The posterior distribution at a test point $\xi_*$ is Gaussian, $g(\xi_*) \mid \mathcal{D}, \xi_* \sim \mathcal{N}(\hat\mu(\xi_*), \hat k(\xi_*))$, with mean and variance given by
\begin{subequations}
\label{eqn:posterior_gp}
\begin{align}
    \hat\mu(\xi_*) &= \mu(\xi_*)+k(\xi_*,  \Xi) k_{y}^{-1} (y-\mu(\Xi)), \label{eqn:gp_postMean} \\
    \hat k(\xi_*) &= k(\xi_*, \xi_*) - k(\xi_*, \Xi) k_{y}^{-1} k(\Xi, \xi_*). \label{eqn:gp_postVar}
\end{align}
\end{subequations}
Here, $k_{y} = k(\Xi, \Xi) + \sigma^2 I$, $I$ is the identity matrix, and we denote the training input matrix by $\Xi = [\xi_1^\top, \dots, \xi_{n_\text{d}}^\top]^\top \in \mathbb{R}^{n_\text{d} \times n_\xi}$ and the corresponding training outputs by $y \in \mathbb{R}^{n_\text{d}}$. The posterior mean \eqref{eqn:gp_postMean} serves as a prediction for $\varphi(\xi_*)$, while the posterior variance \eqref{eqn:gp_postVar} quantifies the associated uncertainty.
The choice of mean and kernel functions, including their so-called hyperparameters, directly affects the model’s expressiveness. These hyperparameters can be inferred via evidence maximization \cite{rasmussen2006gaussian} from the training data $\mathcal{D}$.

We will later rely on the well-calibratedness of GP models in the sense of the following definition to quantify their approximation accuracy.
\begin{definition}[Well-calibrated Gaussian process \cite{krishnamoorthy2022safe}]
\label{defn:gp_calibration}
Suppose $g(\xi)$ is a posterior Gaussian process model with mean function $\hat{\mu}$ and covariance function $\hat{k}$ of a target function $\varphi: \mathbb{R}^{n_\xi} \to \mathbb{R}$.
The GP model is said to be \emph{well-calibrated} if, for all $\xi$ in the region of interest $\mathfrak{X} \subset \mathbb{R}^{n_\xi}$,
\begin{equation}
\label{eqn:gp_calibration}
    \Bigl|\,\hat \mu(\xi)\;-\;\varphi(\xi)\Bigr|
    \;\leq\;
    \beta(\delta^\prime)\,\sqrt{\hat k(\xi)} 
    \coloneq
    \epsilon(\xi; \delta^\prime)
\end{equation}
holds with probability at least $1{-}\delta^\prime$ for any $\delta^\prime\in(0,1)$ and an appropriate constant (confidence parameter) $\beta(\delta^\prime)$.
\end{definition}

A variety of results exist in the literature for determining suitable $\beta(\delta^\prime)$ such that \eqref{eqn:gp_calibration} holds. One such result, adapted from \cite{chowdhury2017kernelized}, is stated below.

\begin{lemma}
\label{lem:gp_calibration_lemma}(\cite{chowdhury2017kernelized})
Assume that the unknown function $\varphi$ lies in the reproducing kernel Hilbert space (RKHS) $\mathcal{H}_k$ associated with a continuous, positive-definite kernel $k$ and that its RKHS norm is bounded by $B \in \mathbb{R}_+$, i.e., $\|\varphi\|_k \le B$. Furthermore, assume that the GP training data are corrupted by $R$-sub-Gaussian measurement noise. Then, the GP modeling $\varphi$ is well-calibrated in the sense of Definition~\ref{defn:gp_calibration} with
\begin{equation*}
   \beta(\delta^\prime)
   =
   B + R\,\sqrt{\gamma + 1 + \ln\left({\delta^\prime}^{-1}\right)},
\end{equation*}
for any given $\delta^\prime \in (0,1)$, where $\gamma$ denotes the maximum information gain.
\end{lemma}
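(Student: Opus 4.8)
The plan is to treat this as a pointwise restatement of the self-normalized confidence bound that underlies the IGP-UCB analysis in \cite{chowdhury2017kernelized}, and to recast that argument into the form required by Definition~\ref{defn:gp_calibration}. First I would exploit the equivalence of the GP posterior mean \eqref{eqn:gp_postMean} (with, without loss of generality, zero prior mean) with kernel ridge regression: picking a feature map $\phi$ with $k(\xi,\xi') = \langle \phi(\xi),\phi(\xi')\rangle$ and writing $\varphi = \langle w, \phi(\cdot)\rangle$ with $\|w\| = \|\varphi\|_k \le B$, the training targets satisfy $y = \Phi w + \varepsilon$, where $\Phi$ stacks the training features and $\varepsilon$ collects the sub-Gaussian noise. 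Substituting this into \eqref{eqn:gp_postMean} and subtracting $\varphi(\xi)$ splits the error $\hat\mu(\xi) - \varphi(\xi)$ into a deterministic regularization (bias) term depending only on $w$ and a stochastic term linear in $\varepsilon$.

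For the bias term I would use a Cauchy--Schwarz step in feature space together with the identity that rewrites the posterior variance \eqref{eqn:gp_postVar} as (a scaling of) the squared posterior feature norm at the test point; this yields a bound of the form $\|\varphi\|_k\sqrt{\hat k(\xi)} \le B\sqrt{\hat k(\xi)}$. The same factorization applied to the stochastic term isolates a factor $\sqrt{\hat k(\xi)}$ multiplied by the self-normalized noise functional $\|\varepsilon\|_{(k(\Xi,\Xi)+\sigma^2 I)^{-1}}$ (up to the noise scaling), so that everything reduces to controlling this functional.

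The technical heart is then the concentration bound for that self-normalized quantity: the martingale / method-of-mixtures inequality for Hilbert-space-valued $R$-sub-Gaussian noise established in \cite{chowdhury2017kernelized} gives, with probability at least $1-\delta'$, a bound of order $R\sqrt{\ln\det(I + \sigma^{-2}k(\Xi,\Xi)) + \ln(1/\delta')}$, and by definition of the maximum information gain one has $\tfrac12\ln\det(I + \sigma^{-2}k(\Xi,\Xi)) \le \gamma$. Adding the bias and stochastic contributions and identifying $\beta(\delta') = B + R\sqrt{\gamma + 1 + \ln(1/\delta')}$ (the precise numerical constants depend on the normalization chosen for the sub-Gaussian noise) establishes \eqref{eqn:gp_calibration} and hence the claimed well-calibratedness.

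The main obstacle is precisely this last concentration step: in the kernel setting the whitened noise vector is effectively infinite-dimensional and its norm has to be controlled uniformly as data accumulate, which precludes a naive union bound and requires the pseudo-maximization / mixture argument of \cite{chowdhury2017kernelized}, together with the bound on the normalization constant in terms of $\gamma$. Since the lemma is quoted directly from that reference, I would cite the concentration result rather than reprove it and devote the remaining effort to the (routine) linear-algebra bookkeeping that converts their regret-oriented statement into the pointwise bound of Definition~\ref{defn:gp_calibration}.
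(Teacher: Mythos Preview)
Your outline is mathematically sound and correctly reconstructs the IGP-UCB confidence-set argument from \cite{chowdhury2017kernelized}: the bias/noise split via the kernel-ridge representation, the Cauchy--Schwarz step expressing both pieces through $\sqrt{\hat k(\xi)}$, and the self-normalized concentration bound controlled by the log-determinant and hence by $\gamma$ are exactly the ingredients of that paper's Theorem~2.

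That said, the present paper does not prove this lemma at all: it is stated as a quoted result, with the citation \cite{chowdhury2017kernelized} in place of a proof. So your final instinct---to cite the concentration result rather than reprove it---is precisely what the paper does, and the detailed sketch you give, while correct, goes well beyond what is needed here. If you were writing this section, a one-line pointer to \cite{chowdhury2017kernelized} would suffice.
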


Several related calibration bounds can be found in the literature, e.g., in \cite{Capone2022,Fiedler2021}, and references therein. Although many of these results rely on the use of continuous, positive-definite kernels, this is typically not restrictive. 
In fact, many positive-definite kernels, such as the squared exponential and Matérn kernels, have the universal approximation property, i.e., can approximate any continuous function on a compact domain with arbitrary accuracy \cite{steinwart2008support}.
The reason is that the RKHS associated with universal kernels can be shown to be densely embedded in the space of continuous functions.

\vspace{-1mm}\subsection{Bayesian Optimization}\vspace{-1mm}\noindent
Bayesian optimization (BO) is a sample-efficient method for optimizing expensive-to-evaluate and potentially non-convex black-box functions. In the context of this work, BO is employed to optimize the closed-loop performance of system \eqref{eqn:discrete_system_general} under the parameterized feedback law given by the MPC \eqref{eqn:mpc_ocp} by learning the controller parameters $\theta \in \Theta$.
Specifically, the optimization problem is given by
\begin{equation}
    \label{eqn:bo_unconstrained_problem}
    \theta^* = \arg \min_{\theta \in \Theta} \, J(\theta),
\end{equation}
where $J: \Theta \to \mathbb{R}$ represents the unknown, expensive-to-evaluate closed-loop cost or performance function.
BO proceeds iteratively. In each iteration $t \in \mathbb{N}$, the following steps are performed:
\begin{enumerate}
    \item[1)] Select the next parameter set $\theta_t$ by optimizing a so-called acquisition function over the surrogate model.
    \item[2)] Evaluate $J(\theta_t)$ by running a closed-loop experiment or simulation.
    \item[3)] Update the dataset $\mathcal{D}_{t+1} \leftarrow \mathcal{D}_t \cup \{\theta_t, J(\theta_t)\}$ and refine the surrogate model using the new data point.
\end{enumerate}

The acquisition function $\alpha: \Theta \to \mathbb{R}$ guides the search and determines which parameter values to evaluate next. It trades off exploration (sampling in regions with high uncertainty) and exploitation (sampling near the current estimated optimum). A common acquisition function is the \textit{lower confidence bound} (LCB), given by
\begin{equation}
    \label{eqn:bo_ucb}
    \alpha(\theta; \mathcal{D}_t) = \hat\mu_{J}(\theta) - \beta_t \sqrt{\hat k_{J}(\theta)}.
\end{equation}
Here, $\hat\mu_{J}(\theta)$ and $\hat k_{J}(\theta)$ are the posterior mean and variance of the GP surrogate model of $J$, respectively, and $\beta_t > 0$ is a tunable exploration parameter. The next query point is chosen according to
\begin{equation}
    \label{eqn:bo_update_unconstrained}
    \theta_{t} = \arg \min_{\theta \in \Theta} \, \alpha(\theta; \mathcal{D}_t).
\end{equation}

The above process is repeated for a fixed number of iterations or until convergence. 
Given its sample efficiency, BO is well-suited for applications such as controller tuning, where function evaluations are expensive and limited. However, most approaches do not exploit the underlying structure of the closed-loop optimization problem. We propose a possible solution in the next section.

\vspace{-1mm}\section{Hierarchical Surrogate Model}\label{sec:hbo}\vspace{-1mm}\noindent
We explain how we exploit the hierarchical structure of the closed-loop parameter learning problem to build a tailored surrogate model and provide regret bounds for the resulting Bayesian optimization algorithm.

\vspace{-1mm}\subsection{Hierarchical Surrogate Model}\vspace{-1mm}\noindent
The proposed approach is motivated by the fact that standard black-box approaches for Bayesian-optimization-based closed-loop learning do not exploit the structure of the underlying problem. In this work, instead of learning $J(\theta)$ in a purely black-box manner, our method leverages explicit knowledge of the hierarchical closed-loop structure.
Namely, the cost function 
\begin{equation}
    \label{eqn:real_closed-loop_cost}
    J(\theta)
    \;=\;
    \sum_{k=1}^{K} 
    \ell \bigl(z_k)
\end{equation}
can be expressed as the sum of stage costs $\ell : \mathbb{R}^{n_\mathrm{x}} \times \mathbb{R}^{n_\mathrm{u}} \to [0, \infty)$ over a trajectory of length $K \in \mathbb{N}$, generated by the parameterized control policy applied to the dynamical system under control.
In \eqref{eqn:real_closed-loop_cost}, we define $z_{k+1} = (x_{k+1}, u_k) \in \mathcal{Z} := \mathcal{X} \times \mathcal{U}$ as the concatenation of the state $x_{k+1}$ and input $u_k$.

Specifically, we learn a model of the closed-loop dynamics and the control law, and then roll it out under a given parameterization to approximate the closed-loop cost \eqref{eqn:real_closed-loop_cost}. An illustration of the approach compared to using a black-box surrogate model of $J$ is shown in Fig.~\ref{fig:hierarchical_surrogate}. Throughout this section, we denote by $ \| \cdot \| $ the Euclidean (2-)norm, unless stated otherwise. We assume the following regularity for the closed-loop stage cost.
\begin{assumption}[Lipschitz Continuous Stage Cost]
\label{assump:lipschitz_stage_cost}
Let $z = (x,u)$ and $z^\prime = (x^\prime,u^\prime)$ be elements of $\Tilde{\mathcal{Z}} \supseteq \mathcal{Z}$, where $\Tilde{\mathcal{Z}}$ is the reachable set of the GP after $K$ steps, starting at the initial state $x_0$. Suppose there exists a constant $ 0 < L_\ell < \infty$ such that the stage cost $\ell$ satisfies
\begin{equation*}
\bigl|\ell(z) \;-\; \ell(z^\prime)\bigr| \;\le\; L_\ell \,\|z - z^\prime\|,
\end{equation*}
for all $z,z^\prime \in \Tilde{\mathcal{Z}}$.
\end{assumption}

\subsubsection{Surrogate via Learned Dynamics Rollout}
Let 
$$
f_{\mathrm{c}} : \mathcal{X}\times\Theta \;\to\;\mathcal{Z}, \quad (x_k, \theta) \mapsto z_{k+1}
$$ be the true but unknown closed-loop dynamics mapping, where both $\mathcal{Z}$ and $\Theta$ are compact.
In our approach, we learn an approximate closed-loop dynamics mapping 
which models $f_{\mathrm{c}}$.
Concretely, we train a separate GP for each output dimension (i.e., each component of $z_{k+1}$), which is a standard approach for GP-based system identification, e.g., \cite{hewing2020cautious}. With a slight abuse of notation, we introduce the shorthand $\hat{f}_{\mathrm{c}} \sim \mathcal{GP}(\hat\mu, \hat k)$ for the approximate multi-output mapping. Once learned, we use the GP mean $\hat\mu$ to approximate the closed-loop update
\begin{equation*}
    \hat{z}_{k+1}
    \;=\;
    \hat\mu\bigl(\hat{x}_k,\theta\bigr),
\end{equation*}
which can be iterated over the closed-loop horizon to evaluate candidate parameters $\theta$ in a computationally efficient manner, as discussed next. 

\subsubsection{Approximate Closed-Loop Cost}
Given a parameter $\theta$, we define a multi-step \emph{rollout} with the learned dynamics, starting from an initial state $\hat{x}_0 = x_0$.  
We then compute the stage-wise cost in each step and sum up along the closed-loop horizon of length $K$ according to
\begin{equation}
\label{eqn:approximate_closed-loop_cost}
    \hat{J}(\theta)
    \;=\;
    \sum_{k=1}^{K} 
    \ell \bigl( \hat z_k\bigr).
\end{equation}
Consequently, a single learned mapping $\hat{\mu}$ suffices to model
$n_\ell$ closed-loop tasks, defined by different closed-loop stage cost functions $\ell_i,~ i = 1, \ldots, n_\ell$. This enables multi-task and transfer learning without retraining the dynamics model for each new closed-loop cost function or reference. Furthermore, each closed-loop rollout automatically provides $K$ data points (state–input pairs), allowing the learned model $\hat{\mu}$ to become more accurate with each experiment across potentially different tasks.
Next, we provide a theoretical result that guarantees sublinear regret of the BO procedure based on the proposed hierarchical surrogate.

\vspace{-1mm}\subsection{Regret Bounds}\vspace{-1mm}\noindent
We operate a Bayesian optimization (BO) procedure over $T$ rounds, selecting $\theta_t$ at each round. We observe the closed-loop states and control inputs, and update $\hat{f}_{\mathrm{c}}$ accordingly. We define the cumulative regret after $T$ iterations as
\begin{equation*}
    R_T \;=\; \sum_{t=1}^T \Bigl( J(\theta_t) - J(\theta^*)\Bigr),
\end{equation*}
where $J(\theta^*) = \min_{\theta \in \Theta} \;J(\theta)$.
We now establish sublinear regret bounds for the proposed hierarchical Bayesian optimization procedure. The analysis proceeds in two steps, as common in the literature on BO regret bounds, e.g., \cite{srinivas2012informationtheoretic}. We first bound the approximation error introduced by using the learning-based surrogate $\hat{J}(\theta)$ in place of the true closed-loop cost $J(\theta)$. This is our main technical contribution. We then apply standard arguments from the literature to show that the cumulative regret grows sublinearly with the number of iterations $T$, resembling classical black-box BO in asymptotic performance while enabling transfer learning.

We rely on the following assumptions.
\begin{assumption}\label{assump:real_invariance}
    For any parameter vector $\theta \in \Theta$, the real closed-loop system satisfies $\forall (x,\theta) \in \mathcal{X}_N \times \Theta:~ f_{\mathrm{c}}(x,\theta) \in \mathcal{X}_N \times \mathcal{U}$ for a set $\mathcal{X}_N \subseteq \mathcal{X}$.
\end{assumption}
This assumption is often not restrictive, as MPC is repeatedly feasible under standard assumptions on the design of the underlying optimal control problem (OCP), see, e.g., \cite{grune2017nonlinear, rawlings2017model}.
Repeated feasibility can be established independently of the here used cost function parametrization of the underlying OCP if certain structural properties are retained and if the MPC is initially feasible.
In consequence, the set of feasible initial conditions $\mathcal{X}_N$, generally a subset of $\mathcal{X}$, is forward invariant for the closed-loop dynamics and the closed-loop inputs satisfy $u_k \in \mathcal{U}$.
Specifically, the set $\mathcal{X}_N$ is the domain of the closed-loop dynamics under the MPC.

\begin{assumption}[GP Lipschitz Continuity]
\label{assump:lipschitz}
There exists $0 < L_{\mathrm{GP}} < \infty$ such that, for all $(x,\theta),(x',\theta') \in \mathbb{R}^{n_\mathrm{x}} \times \mathbb{R}^{n_\mathrm{p}}:$ %
\begin{equation*}
    \bigl\|\hat\mu(x,\theta) - \hat\mu(x',\theta')\bigr\|
    \;\le\;
    L_{\mathrm{GP}} \,\bigl\|\,(x,\theta) - (x',\theta')\bigr\|.
\end{equation*}
\end{assumption}
Again, this assumption can often be satisfied. 
Lipschitz continuity, even globally, can be achieved by appropriate GP design, specifically by the choice of an appropriate prior mean and prior covariance function.
Technically, we further require finite observations of the underlying functions, which is naturally the case in practice, and a well-defined inverse training covariance matrix, which can be achieved by regularization.
Lipschitz continuity of the posterior mean then follows from the observation that it is represented as a linear combination of half-evaluated covariance functions, see \cite{rasmussen2006gaussian} for details.
To justify the use of a GP surrogate for the closed-loop mapping $ f_{\mathrm{c}}(x, \theta) $, we require that the GP can approximate the closed-loop mapping sufficiently well, as formalized in the following. %
\begin{assumption}
    The component functions of the closed-loop mapping $f_\mathrm{c}$ lie in the RKHS corresponding to the kernel of the GP model.
\end{assumption}
For commonly employed universal kernels, this assumption is justified if the closed-loop dynamics and the feedback law are continuous, as their RKHS is densely embedded in the space of continuous functions.
In practice, this assumption must be treated carefully in view of the continuity of the feedback law, which is not naturally given for nonlinear MPC formulations.
We refer the reader to \cite{fiacco1990} for a detailed overview of regularity properties of MPC feedback laws.

We now formulate the following lemma on the error bound for a multi-output GP model.
\begin{lemma}[Error Bound for Multi-output Model]
\label{lem:gp}
Let all assumptions from Lemma~\ref{lem:gp_calibration_lemma} hold. Let $\hat{\mu}_\tau = \begin{bmatrix} \hat{\mu}_\tau^{(1)}, \dots, \hat{\mu}_\tau^{(M)} \end{bmatrix}$ be the posterior means of the multi-output model $\hat f_{\mathrm{c}}$ after $\tau$ data points, with $M=n_\mathrm{x}{+}n_\mathrm{u}$.
If all $M$ GP models are well-calibrated according to Definition~\ref{defn:gp_calibration} for $\delta'=1-(1{-}\delta)^{1/M}$, then with probability at least $1{-}\delta$,
\begin{equation*}
    \forall (x,\theta) \in \mathcal{X}\times \Theta:
    \
    \left\|
        \hat{\mu}_\tau(x,\theta) - f_{\mathrm{c}}(x,\theta)
    \right\|
    \;\le\;
    \left\| \epsilon_\tau(x, \theta; \delta^\prime) \right\|,
\end{equation*}
holds, with $\epsilon_\tau(x, \theta; \delta^\prime) = \begin{bmatrix} \epsilon_\tau^{(1)}(x, \theta; \delta^\prime), \dots, \epsilon_\tau^{(M)}(x, \theta; \delta^\prime) \end{bmatrix}$.
\end{lemma}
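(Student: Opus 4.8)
The plan is to reduce the multi-output claim to the single-output calibration result (Lemma~\ref{lem:gp_calibration_lemma}) via a union bound over the $M=n_\mathrm{x}+n_\mathrm{u}$ output dimensions, and then convert the resulting component-wise error bounds into a bound on the Euclidean norm of the error vector. First I would fix an arbitrary output index $j \in \{1,\dots,M\}$. By assumption, the $j$-th component function $f_{\mathrm{c}}^{(j)}$ lies in the RKHS of the kernel, its RKHS norm is bounded by $B$, and the observation noise is $R$-sub-Gaussian; hence Lemma~\ref{lem:gp_calibration_lemma} applies to the $j$-th GP with $\beta(\delta') = B + R\sqrt{\gamma+1+\ln({\delta'}^{-1})}$. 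This yields, for the chosen $\delta' = 1-(1-\delta)^{1/M}$, that the event
\[
A_j \coloneq \Bigl\{\, \forall (x,\theta)\in\mathcal{X}\times\Theta:\ \bigl|\hat\mu_\tau^{(j)}(x,\theta) - f_{\mathrm{c}}^{(j)}(x,\theta)\bigr| \le \epsilon_\tau^{(j)}(x,\theta;\delta') \,\Bigr\}
\]
holds with probability at least $1-\delta'$.

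Next I would combine the $M$ events. The complement probabilities satisfy $\Pr[A_j^{\mathrm{c}}] \le \delta'$ for each $j$, so by the union bound $\Pr\bigl[\bigcup_{j=1}^M A_j^{\mathrm{c}}\bigr] \le M\delta'$. A slightly sharper route, which matches the precise choice of $\delta'$ in the statement, is to note that with independence across output GPs one has $\Pr\bigl[\bigcap_{j=1}^M A_j\bigr] \ge (1-\delta')^M = 1-\delta$ by the definition of $\delta'$; in the non-independent case the union bound gives $\Pr\bigl[\bigcap_j A_j\bigr] \ge 1-M\delta' \ge 1-\delta$ since $M\delta' \ge 1-(1-\delta')^M = \delta$. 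Either way, with probability at least $1-\delta$ all $M$ component bounds hold simultaneously for every $(x,\theta)\in\mathcal{X}\times\Theta$.

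On that event, I would finish by squaring and summing the component-wise inequalities: for every $(x,\theta)\in\mathcal{X}\times\Theta$,
\[
\bigl\|\hat\mu_\tau(x,\theta) - f_{\mathrm{c}}(x,\theta)\bigr\|^2 = \sum_{j=1}^M \bigl|\hat\mu_\tau^{(j)}(x,\theta) - f_{\mathrm{c}}^{(j)}(x,\theta)\bigr|^2 \le \sum_{j=1}^M \bigl(\epsilon_\tau^{(j)}(x,\theta;\delta')\bigr)^2 = \bigl\|\epsilon_\tau(x,\theta;\delta')\bigr\|^2,
\]
and taking square roots gives the claimed bound. The monotonicity $a^2\le b^2 \Rightarrow a \le b$ for nonnegative reals is all that is needed here, since both $|\hat\mu_\tau^{(j)}-f_{\mathrm{c}}^{(j)}|$ and $\epsilon_\tau^{(j)}$ are nonnegative.

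The argument is essentially routine; the only point requiring care is the bookkeeping of the confidence levels — making sure that the per-dimension budget $\delta'=1-(1-\delta)^{1/M}$ indeed composes to an overall $1-\delta$ guarantee, and being explicit about whether one invokes independence of the output GPs or the plain union bound (the inequality $1-M\delta'\le 1-\delta$ must be checked in the latter case). A secondary subtlety is that Lemma~\ref{lem:gp_calibration_lemma} as stated gives a pointwise-in-$\xi$ high-probability bound, whereas here we assert it uniformly over $(x,\theta)\in\mathcal{X}\times\Theta$; this is the standard reading of Definition~\ref{defn:gp_calibration}, whose calibration statement is quantified over all $\xi$ in the region of interest, so no additional work is needed, but it is worth flagging that the "region of interest" must be taken to contain $\mathcal{X}\times\Theta$ (which is compact, consistent with the assumptions).
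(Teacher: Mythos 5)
Your main argument coincides with the paper's (very terse) proof, which simply observes that the individual output GPs are independent and that the result is then a straightforward generalization of Lemma~\ref{lem:gp_calibration_lemma}: per-dimension calibration at level $\delta' = 1-(1-\delta)^{1/M}$, intersection of the $M$ events via independence giving $(1-\delta')^M = 1-\delta$, and the component-wise-to-Euclidean-norm step are exactly what is required. One correction, though: your claimed fallback via the plain union bound is not valid. By Bernoulli's inequality $(1-\delta')^M \ge 1 - M\delta'$, hence $M\delta' \ge 1-(1-\delta')^M = \delta$ and so $1 - M\delta' \le 1-\delta$; the union bound therefore yields a guarantee \emph{weaker} than $1-\delta$, and the inequality you wrote points the wrong way. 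With the stated choice of $\delta'$ you genuinely need the independence of the output GPs (which the paper explicitly invokes), or else you must shrink the per-dimension budget to $\delta' = \delta/M$. The remaining points — uniformity over $\mathcal{X}\times\Theta$ inherited from Definition~\ref{defn:gp_calibration}, and squaring, summing, and taking square roots to pass from component-wise bounds to the norm bound — are fine.
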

\begin{proof}
    Noting that the individual GPs are independent of each other, the result is a straightforward generalization of Lemma \ref{lem:gp_calibration_lemma}. 
\end{proof}
On this basis, we now establish a bound on the multi-step error when rolling out the GP mean along a trajectory of length $K$. We form the $k$-fold composition $f_{\mathrm{c}}^k(\theta)$, with 
\begin{equation*}
    f_{\mathrm{c}}^{k+1}(\theta) \;=\; f_{\mathrm{c}}\bigl(f_{\mathrm{c}}^k(\theta)|_x,\,\theta\bigr), \quad f_{\mathrm{c}}^1(\theta) \;=\; f_{\mathrm{c}}(x_0,\theta),
\end{equation*}
where $|_x$ extracts the $x$ component from the output of $f_{\mathrm{c}}$, i.e., $(x, u)|_x = x$.
Similarly, we define the multi-step GP mean, given by
\begin{equation*}
    \hat{\mu}^{k+1}(\theta) = \hat{\mu}\bigl(\hat{\mu}^{k}(\theta)|_x,\,\theta\bigr), \quad \hat{\mu}^1(\theta) \;=\; \hat{\mu}(x_0,\theta).
\end{equation*}
\begin{lemma}[Multi-step Error Bound]
\label{lem:composition}
Suppose the assumptions from Lemma~\ref{lem:gp_calibration_lemma} as well as Assumptions~\ref{assump:real_invariance} and \ref{assump:lipschitz} hold. Then, for all $k \in \{ 1, \dots, K \}$ with probability at least $1{-}K\delta$,
\begin{equation*}
  \forall \theta\in\Theta:
  \quad
  \bigl\|\hat{\mu}^k(\theta) - f_{\mathrm{c}}^k(\theta)\bigr\|
  \le
  \nu_\tau^k(\theta)
\end{equation*}
holds, where $\nu_\tau^{k+1}(\theta)=\bigr\|\epsilon_\tau(\hat{\mu}^k|_x, \theta; \delta^\prime)\bigl\| \;+\; L_{\mathrm{GP}}\,\nu_\tau^{k}(\theta)$ and $\nu_\tau^1(\theta)=\bigl\|\epsilon_\tau(x_0, \theta; \delta^\prime)\bigr\|$.
\end{lemma}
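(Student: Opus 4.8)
The plan is to proceed by induction on $k$, the rollout step. The base case $k=1$ is immediate: by Lemma~\ref{lem:gp} applied at the single input $(x_0,\theta)$, with probability at least $1-\delta$ we have $\|\hat{\mu}^1(\theta) - f_{\mathrm{c}}^1(\theta)\| = \|\hat{\mu}(x_0,\theta) - f_{\mathrm{c}}(x_0,\theta)\| \le \|\epsilon_\tau(x_0,\theta;\delta')\| = \nu_\tau^1(\theta)$, uniformly in $\theta \in \Theta$.

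For the inductive step, assume the bound holds at step $k$ on an event of probability at least $1-k\delta$. I would write the one-step error at $k+1$ and split it via the triangle inequality into a \emph{calibration term} and a \emph{propagation term}:
\begin{align*}
  \bigl\|\hat{\mu}^{k+1}(\theta) - f_{\mathrm{c}}^{k+1}(\theta)\bigr\|
  &= \bigl\|\hat{\mu}\bigl(\hat{\mu}^k(\theta)|_x,\theta\bigr) - f_{\mathrm{c}}\bigl(f_{\mathrm{c}}^k(\theta)|_x,\theta\bigr)\bigr\| \\
  &\le \bigl\|\hat{\mu}\bigl(\hat{\mu}^k(\theta)|_x,\theta\bigr) - f_{\mathrm{c}}\bigl(\hat{\mu}^k(\theta)|_x,\theta\bigr)\bigr\| \\
  &\quad + \bigl\|f_{\mathrm{c}}\bigl(\hat{\mu}^k(\theta)|_x,\theta\bigr) - f_{\mathrm{c}}\bigl(f_{\mathrm{c}}^k(\theta)|_x,\theta\bigr)\bigr\|.
\end{align*}
The first term is bounded by $\|\epsilon_\tau(\hat{\mu}^k(\theta)|_x,\theta;\delta')\|$ using Lemma~\ref{lem:gp}, provided $(\hat{\mu}^k(\theta)|_x,\theta)$ lies in the domain $\mathcal{X}\times\Theta$ where the calibration bound is valid — this is exactly what Assumption~\ref{assump:real_invariance} together with the definition of $\tilde{\mathcal{Z}}$ as the GP-reachable set is there to guarantee (the iterated GP mean stays in $\mathcal{X}_N \subseteq \mathcal{X}$, or more conservatively in the reachable set covered by the calibration region). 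The second term is handled by Assumption~\ref{assump:lipschitz}: $f_{\mathrm{c}}$ here should be read consistently with $\hat\mu$ being its well-calibrated estimate, so Lipschitz continuity of the map in its first argument gives the factor $L_{\mathrm{GP}}\,\|\hat{\mu}^k(\theta)|_x - f_{\mathrm{c}}^k(\theta)|_x\| \le L_{\mathrm{GP}}\,\|\hat{\mu}^k(\theta) - f_{\mathrm{c}}^k(\theta)\| \le L_{\mathrm{GP}}\,\nu_\tau^k(\theta)$, where the middle inequality is because the $x$-projection is $1$-Lipschitz and the last is the induction hypothesis. Summing the two contributions yields $\nu_\tau^{k+1}(\theta)$, completing the recursion.

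Finally, I would close with the union-bound bookkeeping over the probability budget. Each rollout step invokes the calibration statement of Lemma~\ref{lem:gp} once (which itself already costs $\delta$ via the per-output split $\delta' = 1-(1-\delta)^{1/M}$), so after $K$ steps the total failure probability is at most $K\delta$ by subadditivity, giving the claimed $1-K\delta$ confidence. The main obstacle — and the point that needs the most care in the writeup — is the domain issue: the calibration bound from Definition~\ref{defn:gp_calibration} and Lemma~\ref{lem:gp} only holds on $\mathcal{X}\times\Theta$ (or the region of interest $\mathfrak{X}$), yet the argument fed to the GP at step $k+1$ is the \emph{predicted} state $\hat{\mu}^k(\theta)|_x$, not a true state. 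One must argue that this predicted trajectory remains inside the region where calibration is assumed — this is the role of defining $\tilde{\mathcal{Z}}$ in Assumption~\ref{assump:lipschitz_stage_cost} as the GP-reachable set and of Assumption~\ref{assump:real_invariance}; I would make explicit that the calibration hypotheses are taken to hold on this (compact) reachable set, so that every application of Lemma~\ref{lem:gp} along the rollout is legitimate. The remaining steps are routine triangle-inequality and induction arguments.
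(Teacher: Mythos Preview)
Your inductive framework and union-bound bookkeeping match the paper's, but your triangle-inequality split differs in a way that creates a genuine gap. You insert $f_{\mathrm{c}}(\hat{\mu}^k|_x,\theta)$ as the intermediate point, so your propagation term is $\bigl\|f_{\mathrm{c}}(\hat{\mu}^k|_x,\theta) - f_{\mathrm{c}}(f_{\mathrm{c}}^k|_x,\theta)\bigr\|$. Bounding this by $L_{\mathrm{GP}}\,\|\hat{\mu}^k - f_{\mathrm{c}}^k\|$ would require $f_{\mathrm{c}}$ to be Lipschitz with constant $L_{\mathrm{GP}}$, but Assumption~\ref{assump:lipschitz} is explicitly about $\hat{\mu}$, not $f_{\mathrm{c}}$ --- and there is no Lipschitz assumption on the true closed-loop map anywhere in the paper. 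Your remark that ``$f_{\mathrm{c}}$ should be read consistently with $\hat\mu$ being its well-calibrated estimate'' does not close this: calibration controls pointwise error, not Lipschitz constants.

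The paper inserts the other natural intermediate point, $\hat{\mu}(f_{\mathrm{c}}^k|_x,\theta)$, which swaps the roles. The propagation term becomes $\bigl\|\hat{\mu}(\hat{\mu}^k|_x,\theta) - \hat{\mu}(f_{\mathrm{c}}^k|_x,\theta)\bigr\|$, handled directly by Assumption~\ref{assump:lipschitz}, and the calibration term becomes $\bigl\|\hat{\mu}(f_{\mathrm{c}}^k|_x,\theta) - f_{\mathrm{c}}(f_{\mathrm{c}}^k|_x,\theta)\bigr\| \le \bigl\|\epsilon_\tau(f_{\mathrm{c}}^k|_x,\theta;\delta')\bigr\|$. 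This decomposition has a second payoff you did not anticipate: calibration is invoked at the \emph{true} iterated state $f_{\mathrm{c}}^k|_x$, which stays in $\mathcal{X}_N \subseteq \mathcal{X}$ by Assumption~\ref{assump:real_invariance}, so the domain concern you flag (whether $\hat{\mu}^k|_x$ lies in the calibration region) simply does not arise. Note that the paper's proof therefore produces $\epsilon_\tau(f_{\mathrm{c}}^k|_x,\cdot)$ rather than the $\epsilon_\tau(\hat{\mu}^k|_x,\cdot)$ written in the lemma statement --- a minor inconsistency in the paper, but the proof's version is the one the available assumptions actually support.
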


\begin{proof}
We prove the assertion by induction.
To this end, note that for $k{=}1$, the statement is immediate since 
$\bigl\|\hat{\mu}(x,\theta) - f_{\mathrm{c}}(x,\theta)\bigr\|\le\bigl\|\epsilon_\tau(x, \theta; \delta^\prime)\bigr\|$ according to Lemma~\ref{lem:gp}.  
Now assume 
$\bigl\|\hat{\mu}^k(\theta) - f_{\mathrm{c}}^k(\theta)\bigr\|\le \nu_\tau^k(\theta)$.  
Then
\begin{align*}
&\bigl\|\hat{\mu}^{k+1}(\theta) - f_{\mathrm{c}}^{k+1}(\theta)\bigr\| \\
\;&=\,
 \ \bigl\|\hat{\mu}\bigl(\hat{\mu}^k|_x\bigr)
    -
    \hat{\mu}\bigl(f_{\mathrm{c}}^k|_x\bigr)
  +
  \hat{\mu}\bigl(f_{\mathrm{c}}^k|_x\bigr)
    -
    f_{\mathrm{c}}\bigl(f_{\mathrm{c}}^k|_x\bigr)
  \bigr\| \\
\;&\le\,\,\,
  \bigl\|
  \hat{\mu}\bigl(f_{\mathrm{c}}^k|_x\bigr)
    -
    f_{\mathrm{c}}\bigl(f_{\mathrm{c}}^k|_x\bigr)
  \bigr\| +
\ \bigl\|\hat{\mu}\bigl(\hat{\mu}^k|_x\bigr)
    -
    \hat{\mu}\bigl(f_{\mathrm{c}}^k|_x\bigr)
  \bigr\|
   \\
\;&\le\, 
\ \bigl\|\epsilon_\tau(f_{\mathrm{c}}^k|_x, \theta; \delta^\prime)\bigr\| \;+\;
L_{\mathrm{GP}}\bigl\|\hat{\mu}^k(\theta) - f_{\mathrm{c}}^k(\theta)\bigr\|,
\end{align*}
concluding the inductive step, where we drop the dependence on some arguments for an easier notation.

To prove the probability of the multi-step bound, we define the single-step event
\begin{equation*}
\scalebox{0.96}{$
  E_k =
  \left\{
     \bigl\|\hat{\mu}(f_\mathrm{c}^k|_x,\theta)-f_\mathrm{c}(f^k_\mathrm{c}|_x,\theta)\bigr\|
     \le
     \bigl\|\epsilon_\tau(f_\mathrm{c}^k|_x, \theta; \delta^\prime)\bigr\|
    \Big| E_{k-1}
  \right\}
$}
\end{equation*}
for each $k=1,\dots,K$ and $\Pr(E_0)=1$. By Lemma~\ref{lem:gp}, 
$\Pr\bigl(E_k\bigr) \ge 1{-}\delta$. 
We want all $E_k$ to occur simultaneously. Using Boole's inequality,
\begin{equation*}
  \Pr\Bigl(\bigcap_{k=1}^K E_k\Bigr)
  \ge
  1 - \sum_{k=1}^K \Pr\bigl(\overline{E_k}\bigr)
  =
  1 - K\delta,
\end{equation*}
where $\overline{E_k}$ are the complements of $E_k$.
Hence, the result holds with probability at least $1{-}K\delta$.
\end{proof}

We now employ the established multi-step error bound to bound the error between the real closed-loop cost \eqref{eqn:real_closed-loop_cost} and the approximate closed-loop cost \eqref{eqn:approximate_closed-loop_cost}.

\begin{lemma}[Multi-step Surrogate Error Bound]
\label{lem:multi_step_J}
Let Assumption~\ref{assump:lipschitz_stage_cost} and the Assumptions from Lemma~\ref{lem:composition} hold.
Then, 
\begin{equation*}
  \forall\theta\in\Theta: \quad\
  \bigl|\hat{J}(\theta)-J(\theta)\bigr|
  \;\le\;
  \sum_{k=1}^{K}
    L_{\ell}\,\nu_\tau^{k}\!(\theta),
\end{equation*}
holds with probability at least $1{-}K\delta$.
\end{lemma}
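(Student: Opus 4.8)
The plan is to bound the difference of the two cost sums term by term, using the triangle inequality to pass from $|\hat J(\theta)-J(\theta)|$ to $\sum_{k=1}^K |\ell(\hat z_k)-\ell(z_k)|$, and then to control each summand via the Lipschitz continuity of the stage cost (Assumption~\ref{assump:lipschitz_stage_cost}) followed by the multi-step state error bound of Lemma~\ref{lem:composition}. The first step is purely algebraic: since $J(\theta)=\sum_{k=1}^K \ell(z_k)$ with $z_k = f_{\mathrm{c}}^k(\theta)$ and $\hat J(\theta)=\sum_{k=1}^K \ell(\hat z_k)$ with $\hat z_k = \hat\mu^k(\theta)$, the triangle inequality gives
\begin{equation*}
  \bigl|\hat J(\theta)-J(\theta)\bigr|
  \;\le\;
  \sum_{k=1}^K \bigl|\ell(\hat\mu^k(\theta)) - \ell(f_{\mathrm{c}}^k(\theta))\bigr|.
\end{equation*}

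Next I would invoke Assumption~\ref{assump:lipschitz_stage_cost}. This requires a small preliminary observation: the arguments $\hat\mu^k(\theta)$ and $f_{\mathrm{c}}^k(\theta)$ both lie in $\tilde{\mathcal{Z}}$. For $f_{\mathrm{c}}^k(\theta)$ this follows from Assumption~\ref{assump:real_invariance} (the real closed loop stays in $\mathcal{X}_N\times\mathcal{U}\subseteq\mathcal{Z}\subseteq\tilde{\mathcal{Z}}$); for $\hat\mu^k(\theta)$ it holds because $\tilde{\mathcal{Z}}$ is, by definition in Assumption~\ref{assump:lipschitz_stage_cost}, the reachable set of the GP after $K$ steps. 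Once both points are in $\tilde{\mathcal{Z}}$, the Lipschitz bound yields $|\ell(\hat\mu^k(\theta))-\ell(f_{\mathrm{c}}^k(\theta))| \le L_\ell \|\hat\mu^k(\theta)-f_{\mathrm{c}}^k(\theta)\|$.

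Then I would apply Lemma~\ref{lem:composition}, which states that on an event of probability at least $1-K\delta$, for every $k\in\{1,\dots,K\}$ and all $\theta\in\Theta$ one has $\|\hat\mu^k(\theta)-f_{\mathrm{c}}^k(\theta)\| \le \nu_\tau^k(\theta)$. Crucially, Lemma~\ref{lem:composition} already establishes this uniformly over $k$ on a \emph{single} common event (its proof intersects the events $E_1,\dots,E_K$ via Boole's inequality), so no additional union bound is incurred when summing over $k$; the probability of the conclusion remains $1-K\delta$. Substituting into the chain above gives
\begin{equation*}
  \bigl|\hat J(\theta)-J(\theta)\bigr|
  \;\le\;
  \sum_{k=1}^K L_\ell \,\nu_\tau^k(\theta),
\end{equation*}
which is exactly the claim.

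The main obstacle — really the only non-routine point — is the domain/measurability bookkeeping: making sure the iterated GP-mean trajectory $\hat\mu^k(\theta)$ stays within the set $\tilde{\mathcal{Z}}$ on which $\ell$ is assumed Lipschitz, and confirming that the favorable event from Lemma~\ref{lem:composition} is the same event across all $K$ stages so that the stated probability $1-K\delta$ is not degraded by the summation. Both are handled by appealing to the definition of $\tilde{\mathcal{Z}}$ as the GP's $K$-step reachable set and to the structure of the proof of Lemma~\ref{lem:composition}; everything else is a direct application of the triangle inequality and the two Lipschitz-type bounds.
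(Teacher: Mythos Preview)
Your proposal is correct and follows essentially the same route as the paper's proof: triangle inequality on the sum, then Assumption~\ref{assump:lipschitz_stage_cost}, then Lemma~\ref{lem:composition} applied simultaneously over $k=1,\dots,K$ on a single event of probability at least $1-K\delta$. Your added bookkeeping about $\hat\mu^k(\theta),\,f_{\mathrm{c}}^k(\theta)\in\tilde{\mathcal{Z}}$ and the ``single event'' observation are helpful clarifications that the paper leaves implicit.
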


\begin{proof}
By application of the triangle inequality
$
  \bigl|\hat{J}(\theta)\;-\;J(\theta)\bigr|
  \le
  \sum_{k=1}^{K} 
     \bigl|\ell\bigl(\hat{\mu}^k(\theta)\bigr)
           - 
           \ell\bigl(f_{\mathrm{c}}^k(\theta)\bigr)\bigr|.
$
Applying Assumption~\ref{assump:lipschitz_stage_cost} yields
$
  \bigl|\ell(\hat{\mu}^k(\theta)) - \ell(f_{\mathrm{c}}^k(\theta))\bigr|
  \le
  L_{\ell}\,\bigl\|\hat{\mu}^k(\theta) - f_{\mathrm{c}}^k(\theta)\bigr\|.
$
From Lemma~\ref{lem:composition}, we have the multi-step composition error 
$\bigl\|\hat{\mu}^k(\theta) - f_{\mathrm{c}}^k(\theta)\bigr\|\le\nu_\tau^{k}(\theta)$ 
simultaneously for all $k=1,\dots,K$ with probability at least $1{-}K\delta$, concluding the proof.
\end{proof}

With this bound on the multi-step objective, we can now formulate the regret bounds following standard techniques, e.g., \cite{srinivas2012informationtheoretic}.

\begin{theorem}[Regret Bounds for Multi-step Surrogate BO]
\label{thm:regret}
Consider the lower confidence bound selection rule, and a GP dynamics surrogate with squared exponential kernel. Suppose the assumptions from Lemmas~\ref{lem:gp} and \ref{lem:multi_step_J} hold. Then, the cumulative regret satisfies with probability at least $1{-}T K \delta$
\begin{equation*}
    R_T
    \;=\;
    \sum_{t=1}^T
    \left|J(\theta_t) - J(\theta^*)\right|
    \;=\;
    \mathcal{O}\left( \sqrt{T} \cdot \log(T)^{d+1} \right).
\end{equation*}
\end{theorem}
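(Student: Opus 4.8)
The plan is to follow the familiar two-stage template for BO regret bounds (as in \cite{srinivas2012informationtheoretic}), but with the instantaneous regret split into a surrogate-approximation term and a classical LCB term. Fix a round $t$. By the LCB selection rule \eqref{eqn:bo_update_unconstrained}, $\alpha(\theta_t;\mathcal{D}_t) \le \alpha(\theta^*;\mathcal{D}_t)$, i.e., $\hat\mu_J(\theta_t) - \beta_t\sqrt{\hat k_J(\theta_t)} \le \hat\mu_J(\theta^*) - \beta_t\sqrt{\hat k_J(\theta^*)}$. The subtlety is that $\hat\mu_J$, $\hat k_J$ here are \emph{not} a GP posterior of $J$ fit to direct observations of $J$; rather, $\hat\mu_J(\theta) = \hat J(\theta)$ is the rolled-out surrogate of \eqref{eqn:approximate_closed-loop_cost}, and one must specify the ``variance'' $\hat k_J$ used in the acquisition. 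I would take $\sqrt{\hat k_J(\theta)} := \sum_{k=1}^K L_\ell\,\nu_\tau^k(\theta)$ — i.e., treat the Lemma~\ref{lem:multi_step_J} bound as the effective posterior standard deviation — so that, by Lemma~\ref{lem:multi_step_J}, $|\,\hat J(\theta) - J(\theta)\,| \le \sqrt{\hat k_J(\theta)}$ holds uniformly in $\theta$ with probability at least $1-K\delta$ on round $t$. With $\beta_t \ge 1$, this is exactly a ``well-calibrated surrogate of $J$'' in the sense of Definition~\ref{defn:gp_calibration}, and the standard LCB argument gives
\begin{equation*}
  J(\theta_t) - J(\theta^*)
  \;\le\;
  2\beta_t\,\sqrt{\hat k_J(\theta_t)}
  \;=\;
  2\beta_t \sum_{k=1}^{K} L_\ell\,\nu_\tau^k(\theta_t),
\end{equation*}
on the good event. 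Summing over $t=1,\dots,T$ and applying Boole's inequality across the $T$ rounds (each contributing a $K\delta$ failure probability) yields the claimed $1-TK\delta$ confidence level.

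The remaining work is to show $R_T = \sum_{t=1}^T 2\beta_t \sum_{k=1}^K L_\ell\,\nu_{\tau_t}^k(\theta_t) = \mathcal{O}(\sqrt{T}\,\log(T)^{d+1})$, where $\tau_t$ is the number of state–input pairs collected by round $t$. Unrolling the recursion for $\nu_\tau^k$ from Lemma~\ref{lem:composition} gives $\nu_\tau^k(\theta) = \sum_{j=1}^{k} L_{\mathrm{GP}}^{\,k-j}\,\|\epsilon_\tau(\,\cdot_{j}, \theta;\delta')\|$, so $\sum_{k=1}^K \nu_\tau^k(\theta) \le C_{K}\,\max_{1\le j\le K}\|\epsilon_\tau(\,\cdot_j,\theta;\delta')\|$ with a constant $C_K = C_K(L_{\mathrm{GP}},K)$ depending only on $K$ and the GP Lipschitz constant (finite since $K$ is fixed and finite). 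Each $\|\epsilon_\tau(x,\theta;\delta')\| = \beta(\delta')\,\sqrt{\sum_{m=1}^M \hat k_\tau^{(m)}(x,\theta)}$ by Definition~\ref{defn:gp_calibration} and Lemma~\ref{lem:gp}. The key point is that, because every closed-loop rollout contributes $K$ fresh data points to the dynamics GP, after $t$ rounds we have $\tau_t = \Theta(t)$ observations, so the posterior standard deviations shrink at the standard rate: the sum-of-variances bound $\sum_{t} \hat k_{\tau_t}^{(m)}(\cdot) \le \mathcal{O}(\gamma_{\tau_T})$ in terms of the maximum information gain, together with Cauchy–Schwarz, gives $\sum_{t=1}^T \|\epsilon_{\tau_t}(\cdot,\theta_t;\delta')\| = \mathcal{O}(\beta(\delta')\sqrt{T\gamma_{\tau_T}})$. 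Combined with the $\mathcal{O}(\sqrt{\log T})$ growth of both $\beta_t$ and $\beta(\delta')$, and the squared-exponential bound $\gamma_\tau = \mathcal{O}(\log(\tau)^{d+1})$ with $d = n_{\mathrm{x}} + n_{\mathrm{p}}$ \cite{srinivas2012informationtheoretic}, we obtain $R_T = \mathcal{O}(\sqrt{T}\,\log(T)^{d+1})$.

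I expect the main obstacle to be the first paragraph, not the second: making rigorous the identification of the rolled-out surrogate $\hat J$ with a ``well-calibrated GP model of $J$'' to which the textbook LCB machinery applies. Concretely, one has to (i) justify defining the acquisition-function variance proxy as the Lemma~\ref{lem:multi_step_J} error bound rather than a genuine GP posterior variance of $J$, (ii) verify that this proxy is itself bounded in a way that lets the per-round sum telescope against the information-gain bound — here the fact that the error bound $\sum_k L_\ell \nu_{\tau_t}^k$ is (up to the $K$-dependent constant $C_K$) controlled by the underlying dynamics-GP posterior standard deviations is what makes the standard $\sqrt{T\gamma_T}$ step go through, and (iii) handle the mild technical point that $\gamma_{\tau_T}$ is the information gain of the \emph{dynamics} GP on the $M$ output channels over its own input space $\mathcal{X}\times\Theta$ of dimension $d = n_{\mathrm x}+n_{\mathrm p}$, which is why the exponent is $d+1$ rather than $n_{\mathrm p}+1$. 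Everything after that is a routine reprise of \cite{srinivas2012informationtheoretic}; the hierarchical structure only enters through the propagation constant $C_K$ and through the bookkeeping $\tau_t = \Theta(t)$.
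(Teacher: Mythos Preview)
Your proposal is correct and takes essentially the same route as the paper: define the LCB ``variance'' as the Lemma~\ref{lem:multi_step_J} bound $\chi_t(\theta)=\sum_k L_\ell\,\nu_\tau^k(\theta)$ (so your extra $\beta_t$ is redundant---set it to $1$), obtain $r_t\le 2\chi_t(\theta_t)$, unroll the $\nu$-recursion into a geometric sum in $L_{\mathrm{GP}}$, and reduce to the dynamics-GP information-gain bound via Cauchy--Schwarz. The one technical step the paper spells out that you leave implicit is the reindexing $\hat k_{tK}(x_{k'},\theta_t)\le \hat k_{(t-1)K+k'}(x_{k'},\theta_t)$, needed so that each posterior variance in the double sum is evaluated just before its own query point enters the dataset, which is what the sum-of-variances lemma (Lemma~5.3 of \cite{srinivas2012informationtheoretic}) actually requires.
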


\begin{proof}[Proof]
By Lemma~\ref{lem:multi_step_J}, we have $\forall\theta\in\Theta:$
\begin{equation}
    \label{eqn:definition_chi}
    \left|\hat{J}_t(\theta) - J(\theta)\right|
    \;\le\;
    \sum_{k=1}^{K}
    L_{\ell}\,\nu_{tK}^{k}\!\bigl(\theta\bigr)
    \;\coloneq\;
    \chi_t(\theta)
\end{equation}
with probability at least $1{-}K\delta$.
Following the approach from \cite{srinivas2012informationtheoretic}, and using the lower confidence bound selection rule given by
\begin{equation}
   \label{eqn:selection_rule_proof}
   \theta_{t} \;=\;
   \arg\min_{\theta\in\Theta}
   \left\{\hat{J}_t(\theta) - \chi_t(\theta)\right\},
\end{equation}
we bound the instantaneous regret 
$\,r_t = J(\theta_t) - J(\theta^*)$ by
$
    r_t
    \le
    J(\theta_t) - \left( \hat{J}_t(\theta_t)-\chi_t(\theta_t) \right)
    \le
    2\,\chi_t(\theta_t).
$
Here, we use that $\hat{J}_t(\theta_t)-\chi_t(\theta_t) \le \hat{J}_t(\theta^*)-\chi_t(\theta^*) \le J(\theta^*)$, where the first step follows from \eqref{eqn:selection_rule_proof} and the second step follows from \eqref{eqn:definition_chi} with probability at least $1{-}K\delta$.
Summing over $t=1,\dots,T$ gives
\begin{align*}
    R_T
    &=
    \sum_{t=1}^T r_t
    \;\le\;
    2\,\sum_{t=1}^T \chi_t(\theta_t)
    = 2 \sum_{t=1}^T \sum_{k=1}^K L_\ell \cdot \nu_{tK}^k(\theta_t) \\
    &= 2 \, L_\ell \sum_{t=1}^T \sum_{k=1}^K \sum_{k'=0}^{k-1} 
        L_{\mathrm{GP}}^{k - k' - 1} 
        \left\| \varepsilon_{tK}\left(x_{k'}, \theta_t \right) \right\|, \\
    &\leq 2 \, L_\ell \sum_{t=1}^T \sum_{k=1}^K \sum_{k'=0}^{k-1} 
        L_{\mathrm{GP}}^{k - k' - 1} 
        \left\| \beta_{tK} \right\| \cdot 
        \left\| \sqrt{ \hat{k}_{tK}(x_{k'}, \theta_t)} \right\|,
\end{align*}
where we replace the composition-based notation with the states $x_{k'}$, which are part of the GP training data.
Here, $\beta_{tK} = \begin{bmatrix}\beta_{tK}^{(1)}, \dots, \beta_{tK}^{(M)}\end{bmatrix}$ and $\hat{k}_{tK}(x_{k'}, \theta_t) = \begin{bmatrix}\hat{k}_{tK}^{(1)}(x_{k'}, \theta_t), \dots, \hat{k}_{tK}^{(M)}(x_{k'}, \theta_t)\end{bmatrix}$.
We extend the sum over $k'$ until $K{-}1$, collect the sum over $L_{\mathrm{GP}}^{k - k' - 1}$ in the constant $C_{\mathrm{L}}=1{+}L_{\mathrm{GP}}{+}L_{\mathrm{GP}}^2{+}\dots{+}L_{\mathrm{GP}}^{K{-}1}$ and apply the Cauchy-Schwarz inequality, giving
\begin{align*}
    &R_T
    \leq 2 \, L_\ell C_\mathrm{L} \sum_{t=1}^T \sum_{k'=0}^{K-1} 
        \left\| \beta_{tK} \right\| \cdot 
        \left\| \sqrt{ \hat{k}_{tK}(x_{k'}, \theta_t)} \right\| \\
    &\leq 2 \, L_\ell C_\mathrm{L} 
        \sqrt{\sum_{t=1}^T \!\sum_{k'=0}^{K-1}\! \left\| \beta_{tK} \right\|^2}
        \sqrt{\sum_{t=1}^T \!\sum_{k'=0}^{K-1}\! \left\| \sqrt{\hat{k}_{tK}(x_{k'}, \theta_t)} \right\|^2}.
\end{align*}
Expanding the norm over the $M$ GP output dimensions gives
\begin{equation*}
\scalebox{0.94}{$
    R_T
    \leq 2 \, L_\ell \, C_{\mathrm{L}}
    \sqrt{\sum_{i=1}^M \sum_{t=1}^T  K \cdot \left( \beta_{tK}^{(i)} \right)^2 } 
    \sqrt{\sum_{i=1}^M \sum_{t=1}^T \sum_{k'=0}^{K-1} \hat{k}^{(i)}_{tK}(x_{k'}, \theta_t) }.
$}
\end{equation*}
To obtain an information-theoretic bound on the regret, we aim to index the predictive variances $\hat{k}$ per step $k'$, by utilizing $\hat{k}_{tK}(x_{k'}, \theta_t) \leq \hat{k}_{(t{-}1)K{+}k'}(x_{k'}, \theta_t)$.
This step is justified by our algorithmic design: The GP model is only reconditioned after collecting data from a full $K$-step closed-loop simulation for computation efficiency. 
We upper bound the posterior variances after conditioning on data of a full trajectory (at time $tK$) with the ``single-step'' variances that would arise if the GP was reconditioned after each individual closed-loop step (at time $(t{-}1)K{+}k'$). 
Therefore, we bound
\begin{align*}
    \sum_{t=1}^T \sum_{k'=0}^{K-1} \hat{k}^{(i)}_{tK}(x_{k'}, \theta_t) 
    \leq \sum_{t=1}^T \sum_{k'=0}^{K-1} \hat{k}^{(i)}_{(t{-}1)K{+}k'}(x_{k'}, \theta_t) 
    \\
    \leq C_\gamma^{(i)}\sum_{t=1}^T \sum_{k'=0}^{K-1} \frac{1}{2}\log \left( 1 + \sigma^{-2}\hat{k}^{(i)}_{(t{-}1)K{+}k'}(x_{k'}, \theta_t)\right),
\end{align*}
where the last inequality requires $\hat{k}^{(i)}_{(t{-}1)K{+}k'}(x_{k'}, \theta_t) \leq 1$, which can be achieved by proper scaling, and $C_\gamma^{(i)} = 2/\log(1+\sigma^{-2})$ is a constant.
This allows us to employ Lemma 5.3 in \cite{srinivas2012informationtheoretic} and express the mutual information gain in terms of the sum of predictive posterior variances of the GP at the selected query points, which is upper bounded by the maximum information gain $\gamma_{TK}^{(i)}$.
The final regret bound
\begin{align*}
    R_T &\leq 2 \, L_\ell \, C_{\mathrm{L}} \cdot 
    \sqrt{ \sum_{i=1}^M \sum_{t=1}^T K \cdot \left( \beta_{tK}^{(i)} \right)^2 } \cdot 
    \sqrt{ \sum_{i=1}^M C_{\gamma}^{(i)} \cdot \gamma_{TK}^{(i)} }
\end{align*}
is written in compact form, showing the cumulative regret scaling as $\mathcal{O}\left( \sqrt{T} \log(T)^{d+1} \right)$. This follows by observing that, for each of the $M$ independent Gaussian process models used to model the system outputs, the maximum information gain satisfies $\gamma_{TK}^{(i)} = \mathcal{O}(\log(T)^{d+1})$, provided the kernel is the squared exponential kernel. This growth behavior is derived in Theorem 5 of \cite{srinivas2012informationtheoretic}. In addition, each GP confidence parameter satisfies $\beta_{tK}^{(i)}(\delta) = B + R\sqrt{\gamma_{tK}^{(i)} + 1 + \ln(1/\delta)}$ according to Lemma~\ref{lem:gp_calibration_lemma}. 

The regret probability of at least $1{-}T K \delta$ follows from applying the probability of at least $1{-}K \delta$ from Lemma~\ref{lem:multi_step_J} in each round and application of Boole's inequality over $T$ rounds.
\end{proof}

In summary, our approach matches the standard black-box approach in terms of asymptotic regret behavior. Our approach trades off (initially) higher error resulting from error propagation along the closed-loop horizon with the possibility for knowledge transfer between different closed-loop tasks. Additionally, we have more data points available compared to the standard black-box approach. In the next section, we illustrate the proposed hierarchical surrogate model for BO in closed-loop settings in a multi-task scenario.

\vspace{-1mm}\section{Simulation Results}\label{sec:simulation}\vspace{-1mm}\noindent
To demonstrate the effectiveness of the proposed hierarchical Bayesian optimization (HBO) framework, we consider a nonlinear cart-pole system as a benchmark. The system consists of a horizontally moving cart with a pole attached via a hinge and is governed by nonlinear dynamics. The system state is defined as $x = \begin{bmatrix} p & \dot{p} & \phi & \dot{\phi} \end{bmatrix}^\top,$ where $ p $ is the horizontal position of the cart, $ \dot{p} $ is the cart velocity, $ \phi $ is the pole angle relative to the upright position, and $ \dot{\phi} $ is the angular velocity of the pole.

The control strategy is based on a nominal MPC scheme that uses the true discretized system model, obtained via a fourth-order Runge–Kutta integration. The MPC controller solves an optimal control problem of the form \eqref{eqn:mpc_ocp} with a parameterized stage cost function given by
\begin{equation*}
l_\theta({\hat x}_{i \mid k}, {\hat u}_{i \mid k}) = {\hat x}_{i \mid k}^\top Q(\theta) {\hat x}_{i \mid k} + {\hat u}_{i \mid k}^\top R(\theta) {\hat u}_{i \mid k},
\end{equation*}
where the parameters $\theta \in \mathbb{R}^5$ represent the diagonal elements of the state cost matrix $Q$ and the scalar control cost $R$. These five parameters are to be learned via Bayesian optimization.

To evaluate controller performance, we define a closed-loop cost $ J(\theta) $ computed over a fixed horizon of $ K = 25 $ simulation steps. At each BO iteration, the candidate parameter vector $ \theta $ is applied to the controller and evaluated in the closed-loop. The resulting trajectory is evaluated using fixed evaluation cost matrices $ Q_{\text{cl}} $ and $ R_{\text{cl}} $, with the closed-loop cost given by
\begin{equation*}
J(\theta) = \sum_{k=0}^{K-1} x_{k+1}(\theta)^\top Q_{\text{cl}} x_{k+1}(\theta) + u_k(\theta)^\top R_{\text{cl}} u_k(\theta),
\end{equation*}
where $ x_k $ and $ u_k $ denote the state and control input at time step $ k $ resulting from the application of the MPC controller with parameters $ \theta $.

We consider a multi-task learning setting in which two distinct tasks are defined by varying the evaluation cost matrices used to compute the closed-loop cost. In Task~1, the evaluation state cost matrix is set to $ Q_{\text{cl}} = \mathrm{diag}(5, 0.1, 5, 0.1) $ and the input cost is $ R_{\text{cl}} = 0.1 $. In Task~2, the state cost is changed to $ Q_{\text{cl}} = \mathrm{diag}(4, 0.2, 4, 0.2) $, and the input cost is increased to $ R_{\text{cl}} = 0.2 $. Both tasks share the same underlying system dynamics and MPC structure, but differ in the way closed-loop controller performance is evaluated. To ensure a consistent comparison between methods and across tasks, we consider the same, single initial sample \( \theta_0 \in \Theta \) for all simulations.

The proposed hierarchical BO method is compared to a standard Bayesian optimization baseline that models $J(\theta)$ as a black-box function via a separate Gaussian process for each task, as well as to a multi-task Bayesian optimization method from the literature~\cite{swersky2013multitask}, with an expected improvement acquisition function. In contrast to these approaches, HBO constructs a shared surrogate model by learning the parameter-dependent closed-loop dynamics, which enables transfer learning across tasks.
For hierarchical BO, rollouts are performed using the GP \emph{mean only} propagation. This practical approximation is common in literature \cite{hewing2020cautious} and trades off computational simplicity against conservativeness. In contrast, the theoretical analysis relies on Lipschitz-based error bounds to guarantee regret properties. While the latter are not evaluated explicitly during rollouts in our simulation, they provide a sound basis for the method's convergence and transfer guarantees.

\begin{figure}
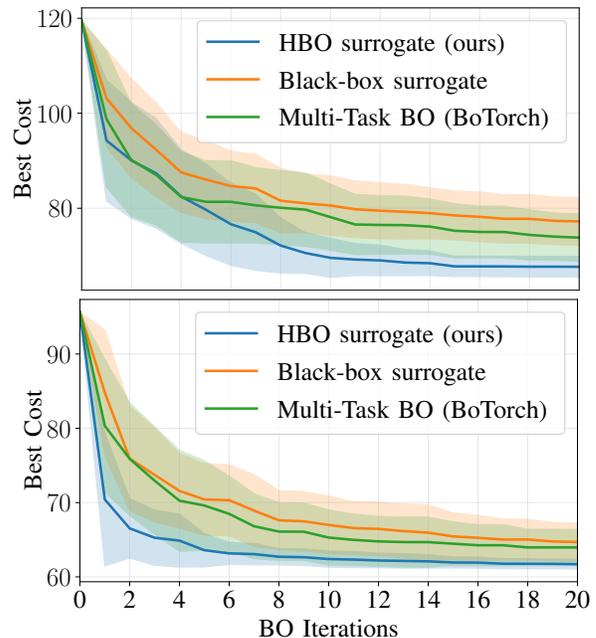

    \centering
    \vspace{0mm}
    \begin{subfigure}{0.48\textwidth}
        \centering
        \hspace{-4mm}
        \scalebox{0.48}{\input{figures/aggregate_regret_Task_1.pgf}}
    \end{subfigure}
    \begin{subfigure}{0.48\textwidth}
        \centering
        \vspace{-1.2mm}
        \scalebox{0.48}{\input{figures/aggregate_regret_Task_2.pgf}}
    \end{subfigure}
    \vspace{-8mm}
    \caption{Performance comparison between the proposed HBO approach, a standard black-box BO baseline, and a multi-task BO baseline across two tasks in the noise-free setting. Task~1 is shown at the top and Task~2 at the bottom. Shaded areas represent one standard deviation over $30$ optimization runs.}
    \vspace{-8mm}
    \label{fig:sim_multi_task}
\end{figure}
Figure~\ref{fig:sim_multi_task} illustrates the performance across both tasks in terms of the best closed-loop cost achieved over BO iterations in the noise-free case, i.e., $w_k = 0$ for all $k$. Note that we focus on closed-loop cost in the analysis of the simulation results, since the true global cost minimum is unknown and exact regret cannot be computed. In Task~1, the proposed HBO approach, the standard black-box BO, and the multi-task BO from the literature perform comparably during the initial optimization phase, indicating that incorporating structural knowledge does not degrade performance when no prior data is available. Towards the end of the optimization run, however, HBO also outperforms both baselines, which we hypothesize is due to the fact that HBO collects significantly more data per iteration by leveraging multiple state–input samples from each closed-loop rollout. Moreover, HBO exhibits a lower standard deviation over the $30$ optimization runs, indicating more consistent performance across trials. In Task~2, HBO accelerates convergence more noticeably by reusing information acquired during Task~1, as shown in Figure~\ref{fig:sim_multi_task} (bottom). This demonstrates effective transfer learning between tasks. The multi-task BO baseline also benefits from knowledge transfer, but its convergence is slower than that of HBO in our setting. In contrast, the standard black-box BO must relearn each task from scratch.
Figure~\ref{fig:sim_multi_task_noise} shows results for the noisy setting with $\Sigma_w = 0.01^2 I$ applied in the closed-loop simulation. The trends are similar to the noise-free case, and HBO continues to outperform both baselines.
Figure~\ref{fig:cumulative_regret_task_1} shows the cumulative cost incurred by each method over the optimization process. The observed sublinear growth in cumulative cost for HBO is consistent with the theoretical regret guarantees. Moreover, HBO exhibits a slower growth rate in cumulative cost compared to the two baselines, suggesting improved sample efficiency in practice. We hypothesize that this effect arises because HBO accumulates significantly more training data per iteration as each closed-loop rollout contributes multiple data points to the shared dynamics model, whereas the black-box methods update their surrogate using only a single scalar cost observation per evaluation. This richer data stream enables HBO to build a more accurate surrogate model earlier, thus additionally accelerating the learning process, even within a single task.
These results underscore the benefits of explicitly modeling the closed-loop structure when learning controller parameters across multiple related tasks.

\begin{figure}[t]
    \centering
    \vspace{2mm}
    \scalebox{0.48}{\input{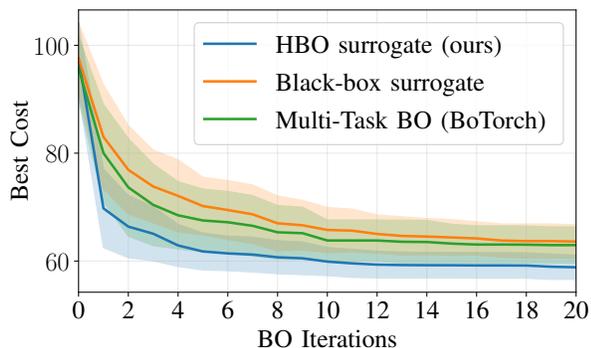}}
    \vspace{-3mm}
    \caption{Performance comparison between the proposed HBO approach, a standard black-box BO baseline, and a multi-task BO baseline for Task~2 in the noisy setting. Shaded areas represent one standard deviation over $30$ optimization runs.}
    \vspace{-6.5mm}
    \label{fig:sim_multi_task_noise}
\end{figure}

\begin{figure}[t]
    \centering
    \vspace{2mm}
    \scalebox{0.48}{\input{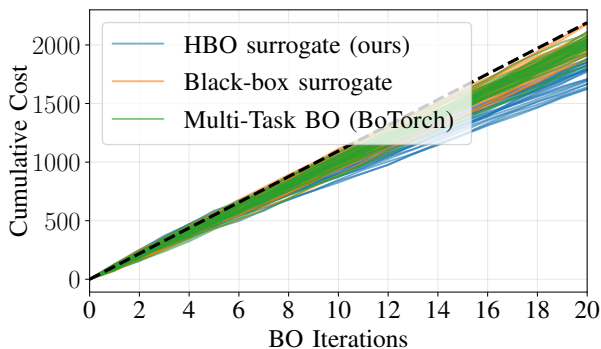}}
    \vspace{-3mm}
    \caption{Cumulative costs for Task~1 with $30$ optimization runs for each method. HBO exhibits sublinear growth, in line with our theoretical predictions. The dashed black line represents linear growth for reference.}
    \vspace{-7.5mm}
    \label{fig:cumulative_regret_task_1}
\end{figure}

\vspace{-2mm}\vspace{-1mm}\section{Conclusions}\label{sec:conclusion}\vspace{-1mm}\noindent
We proposed a hierarchical Bayesian optimization framework for efficient controller parameter learning in sequential decision-making and control. By leveraging structural knowledge of the closed-loop dynamics and closed-loop cost, our approach enables the construction of a surrogate model that generalizes across tasks, enabling transfer learning. Theoretical analysis established sublinear regret bounds comparable to standard black-box BO, while providing practical benefits through shared model updates.
Simulation results on a nonlinear cart-pole benchmark demonstrated that the proposed method matches the performance of black-box BO in single-task scenarios and significantly outperforms it in multi-task settings by sharing knowledge between different tasks. These findings highlight the advantages of embedding known system structure into the closed-loop learning process.
In future work, we aim to extend the method to enable transfer between more diverse task types and investigate higher-dimensional parameterizations. To this end, we plan to explore Bayesian neural networks for dynamics learning, enabling scalability to higher-dimensional parameter spaces.

\vspace{-1mm}\section{ACKNOWLEDGMENTS}\vspace{-1mm}\noindent
This research was supported by the German Research Foundation (DFG) within RTG 2761 LokoAssist under grant no. 450821862.

\vspace{-3mm}
\bibliographystyle{ieeetr}
\bibliography{bibliography}

\begin{thebibliography}{10}

\bibitem{rawlings2017model}
J.~B. Rawlings, D.~Q. Mayne, and M.~Diehl, {\em Model Predictive Control: {{Theory}}, Computation, and Design}.
\newblock {Nob Hill Publishing}, 2nd~ed., 2017.

\bibitem{paulson2023tutorial}
J.~A. Paulson, F.~Sorourifar, and A.~Mesbah, ``A tutorial on derivative-free policy learning methods for interpretable controller representations,'' in {\em 2023 Am. Control Conf. (ACC)}, pp.~1295--1306, 2023.

\bibitem{kordabad2023reinforcement}
A.~B. Kordabad, D.~Reinhardt, A.~S. Anand, and S.~Gros, ``Reinforcement learning for {{MPC}}: Fundamentals and current challenges,'' {\em IFAC-PapersOnLine}, vol.~56, no.~2, pp.~5773--5780, 2023.

\bibitem{hirt2024safe}
S.~Hirt, M.~Pfefferkorn, and R.~Findeisen, ``Safe and stable closed-loop learning for neural-network-supported model predictive control,'' in {\em 2024 IEEE Conf. Decis. Control (CDC)}, pp.~4764--4770, 2024.

\bibitem{hirt2023stability}
S.~Hirt, M.~Pfefferkorn, A.~Mesbah, and R.~Findeisen, ``Stability-informed {{Bayesian}} optimization for model predictive control cost function learning,'' {\em IFAC-PapersOnLine}, vol.~58, no.~18, pp.~208--213, 2024.

\bibitem{hirt2024timeseriesinformed}
S.~Hirt, L.~Theiner, and R.~Findeisen, ``Time-series-informed closed-loop learning for sequential decision making control,'' {\em arXiv}, 2024.

\bibitem{dai2020multitask}
S.~Dai, J.~Song, and Y.~Yue, ``Multi-task {{Bayesian}} optimization via {{Gaussian}} process upper confidence bound,'' in {\em 2020 {{ICML}} {{Workshop}} on {{Real World Experiment Design}} and {{Active Learning}}}, 2020.

\bibitem{swersky2013multitask}
K.~Swersky, J.~Snoek, and R.~P. Adams, ``Multi-{{Task Bayesian Optimization}},'' in {\em 2013 Adv. Neural Inf. Process. Syst.}, vol.~26, 2013.

\bibitem{bai2023transfer}
T.~Bai, Y.~Li, Y.~Shen, X.~Zhang, W.~Zhang, and B.~Cui, ``Transfer learning for {{Bayesian}} optimization: A survey,'' {\em arXiv}, 2023.

\bibitem{baumgartner2025bayesian}
K.~Baumg{\"a}rtner and M.~Diehl, ``Bayesian optimization with lower confidence bounds for minimization problems with known outer structure,'' {\em arXiv}, 2025.

\bibitem{paulson2022cobalt}
J.~A. Paulson and C.~Lu, ``{{COBALT}}: Constrained {{Bayesian}} optimization of computationally expensive grey-box models exploiting derivative information,'' {\em Comput. Chem. Eng.}, vol.~160, p.~107700, 2022.

\bibitem{astudillo2021thinking}
R.~Astudillo and P.~I. Frazier, ``Thinking {{Inside}} the {{Box}}: {{A Tutorial}} on {{Grey-Box Bayesian Optimization}},'' in {\em 2021 {{Winter Simulation Conference}} ({{WSC}})}, pp.~1--15, 2021.

\bibitem{srinivas2012informationtheoretic}
N.~Srinivas, A.~Krause, S.~M. Kakade, and M.~W. Seeger, ``Information-{{Theoretic Regret Bounds}} for {{Gaussian Process Optimization}} in the {{Bandit Setting}},'' {\em IEEE Trans. Inf. Theory}, vol.~58, no.~5, pp.~3250--3265, 2012.

\bibitem{chowdhury2021noregret}
S.~R. Chowdhury and A.~Gopalan, ``No-regret {{Algorithms}} for {{Multi-task Bayesian Optimization}},'' in {\em 2021 Int. Conf. Artif. Intell. Stat.}, pp.~1873--1881, 2021.

\bibitem{guan2024improved}
J.~Guan and H.~Xiong, ``Improved {{Bayes Regret Bounds}} for {{Multi-Task Hierarchical Bayesian Bandit Algorithms}},'' in {\em 2024 Adv. Neural Inf. Process. Syst.}, 2024.

\bibitem{rasmussen2006gaussian}
C.~E. Rasmussen and C.~K. Williams, {\em Gaussian Processes for Machine Learning}.
\newblock {MIT Press}, 2006.

\bibitem{krishnamoorthy2022safe}
D.~Krishnamoorthy and F.~J. Doyle, ``Safe {{Bayesian}} optimization using interior-point methods -- {{Applied}} to personalized insulin dose guidance,'' {\em IEEE Control Syst. Lett.}, vol.~6, pp.~2834--2839, 2022.

\bibitem{chowdhury2017kernelized}
S.~R. Chowdhury and A.~Gopalan, ``On kernelized multi-armed bandits,'' in {\em 2017 Int. Conf. Mach. Learn.}, pp.~844--853, PMLR, 2017.

\bibitem{Capone2022}
A.~Capone, A.~Lederer, and S.~Hirche, ``Gaussian process uniform error bounds with unknown hyperparameters for safety-critical applications,'' in {\em 2022 Int. Conf. Mach. Learn.}, 2022.

\bibitem{Fiedler2021}
C.~Fiedler, C.~W. Scherer, and S.~Trimpe, ``Practical and rigorous uncertainty bounds for {{Gaussian}} process regression,'' in {\em 2021 AAAI Conf. Artif. Intell.}, 2021.

\bibitem{steinwart2008support}
I.~Steinwart and A.~Christmann, {\em Support {{Vector Machines}}}.
\newblock Information {{Science}} and {{Statistics}}, Springer, 2008.

\bibitem{hewing2020cautious}
L.~Hewing, J.~Kabzan, and M.~N. Zeilinger, ``Cautious model predictive control using {{Gaussian}} process regression,'' {\em IEEE Trans. Control Syst. Technol.}, vol.~28, no.~6, pp.~2736--2743, 2020.

\bibitem{grune2017nonlinear}
L.~Gr{\"u}ne and J.~Pannek, {\em Nonlinear {{Model Predictive Control}}}.
\newblock Communications and {{Control Engineering}}, Springer, 2017.

\bibitem{fiacco1990}
A.~V. Fiacco and Y.~Ishizuka, ``Sensitivity and stability analysis for nonlinear programming,'' {\em Ann. Oper. Res}, vol.~27, no.~1, pp.~215--235, 1990.

\end{thebibliography}

\end{document}